\newcommand{\ie}{{\it ie}.}
\newcommand{\ket}[1]{\ensuremath{|#1\rangle}}
\newcommand{\bra}[1]{\ensuremath{\langle#1|}}
\newcommand{\proj}[1]{\ensuremath{\ket{#1}\bra{#1}}}
\newcommand{\braket}[2]{\ensuremath{\langle{#1}|{#2}\rangle}}
\newcommand{\Tr}{\mathrm{Tr}}
\newcommand{\tr}{\Tr}
\newcommand{\1}{{\rm 1\hspace{-0.9mm}l}}
\begin{document}

\title{Conditional entropic uncertainty relations for Tsallis entropies
}
\subtitle{}


\author{Dariusz Kurzyk         \and
      {\L}ukasz Pawela			\and
      Zbigniew Pucha{\l}a
}


\institute{D. Kurzyk \at
				Institute of Theoretical and Applied Informatics, Polish Academy of 
				Sciences\\
				Ba\l{}tycka 5, 44-100 Gliwice, Poland\\
              \email{dkurzyk@iitis.pl}           
           \and
           {\L}. Pawela \at
              Institute of Theoretical and Applied Informatics, Polish Academy of 
              Sciences\\
              Ba\l{}tycka 5, 44-100 Gliwice, Poland
           \and
           Z. Pucha{\l}a \at
           Institute of Theoretical and Applied Informatics, Polish Academy of 
           Sciences\\
           Ba\l{}tycka 5, 44-100 Gliwice, Poland\\
           Faculty of Physics, Astronomy and Applied Computer Science 
           Jagiellonian University,\\ {\L}ojasiewicza 11,  30-348 Krak{\'o}w, 
           Poland
}

\date{Received: date / Accepted: date}

\maketitle

\begin{abstract}
The entropic uncertainty relations are a very active field of scientific
inquiry. Their applications  include quantum cryptography and studies of quantum
phenomena such as correlations and non-locality. In this work we find
entanglement-dependent entropic uncertainty relations in terms of the Tsallis
entropies for states with a fixed amount of entanglement. Our main result is
stated as Theorem~\ref{th:bound}. Taking the special case of von Neumann entropy
and utilizing the concavity of conditional von Neumann entropies, we extend our
result to mixed states. Finally we provide a lower bound on the amount of
extractable key in a quantum cryptographic scenario.
\keywords{Conditional uncertainty relations \and Tsallis entropies \and Quantum cryptography}
\end{abstract}

\section{Introduction}
Formulated by Heisenberg \cite{heisenberg1927anschaulichen}, the uncertainty 
relation gives insight into differences between classical and quantum 
mechanics. According to the relation, simultaneous measurements of some 
non-commuting observables of a particle cannot be predicted with arbitrary 
precision.  

Numerous studies over the uncertainty relations led to entropic formulation by
Bia{\l}ynicki-Birula and Mycielski \cite{bialynicki1975uncertainty}, as a sum of
two continuous Shannon entropies, for probability distributions of position and
momentum. As our goal is to consider general observables, let us choose two
Hermitian non-commuting operators $X$ and $Y$. The first uncertainty relation
that holds for a pair of arbitrary observables was derived by Deutsch
\cite{deutsch1983uncertainty}
\begin{equation}
H(X)+H(Y)\geq -2 \log \frac{1+c}{2}=B_D,
\end{equation}
where $H(X)$ and $H(Y)$ denote the Shannon entropies of the probability 
distributions obtained during measurements of $X$ and $Y$ respectively. 
If $\ket{\phi_j}$, $\ket{\psi_k}$ are the eigenvectors of $X$ and $Y$, then 
$c=\max_{j,k}|\braket{\psi_j}{\phi_k}|$. 
Maassen and Uffink \cite{maassen1988generalized} obtained a stronger result 
\begin{equation}\label{massen}
H(X) + H(Y) \geq -2\log c=B_{MU},
\end{equation}
where $H(X)$, $H(Y)$ and $c$ are the same as in relation proposed by Deutsch. 

The entropic uncertainty relations are a very active field of scientific
inquiry \cite{wehner2010entropic,coles2017entropic}. One of the reasons are the
applications in quantum cryptography
\cite{koashi2005simple,divincenzo2004locking,damgaard2008cryptography}. Another
area where entropic uncertainty relations are widely used are studies of
quantum phenomena such as correlations and non-locality
\cite{guhne2004characterizing,oppenheim2010uncertainty,rastegin2016separability}.
Some results were generalized, hence entropic formulations of the uncertainty 
relation in terms of R{\'e}nyi entropies is included in 
\cite{zozor2013generalized}. Uncertainty relations for mutually unbiased bases 
and symmetric informationally complete measurements in terms of generalized 
entropies of R{\'e}nyi and Tsallis can be found \cite{rastegin2013uncertainty}.

In~\cite{kaniewski2014entropic} it was shown that entropic uncertainty
relations can be derived for binary observables from effective
anti-commutation, which can be important in device-independent cryptography.
This result was generalized in \cite{xiao2017uncertainty} for entropic
uncertainty relations in the presence of quantum memory.

The majorization-based bounds of uncertainty relation were first introduced by
Partovi in \cite{partovi2011majorization}, which was generalized in
\cite{puchala2013majorization,friedland2013universal}. In
\cite{puchala2013majorization}, majorization techniques were applied to obtain
lower bound of the uncertainty relation, which can give the bound stronger than
the well know result of Massen and Uffink. The formulation of strong
majorization uncertainty relation presented in \cite{rudnicki2014strong} is
involved, but in the case of qubits it can be expressed as
\begin{equation}\label{smeur}
H(X) + H(Y) \geq -c \log c - (1-c) \log (1-c) = B_{Maj2}.
\end{equation}
The asymptotic analysis of entropic uncertainty relations for random
measurements has been provided in~\cite{adamczak2016asymptotic} with the use
majorization bounds. Some interesting results along these lines are included in
\cite{rudnicki2015majorization,rastegin2016majorization,puchala2015certainty}.

In \cite{berta2010uncertainty}, Berta \emph{et al.} considered the uncertainty
relation for a system with the presence of a quantum memory. In this setup,
the system is described by a bipartite density matrix $\rho_{AB}$. Quantum conditional 
entropy can be defined as
\begin{equation}
S(A|B) = S(A, B) - S(B),\label{eq:chain}
\end{equation}
where $S(B)$ denotes the von Neumann entropy of the state $\rho_B=\tr_A 
\rho_{AB}$. Eq.~\eqref{eq:chain} is also known as the chain rule. We also 
introduce the states $\rho_{XB}$ and $\rho_{YB}$ as
\begin{equation}
\begin{split}
\rho_{XB} &= \sum_i \left(\proj{\psi_i} \otimes \1\right) \rho_{AB} 
\left(\proj{\psi_i} \otimes \1\right) \\
\rho_{YB} &= \sum_i \left(\proj{\phi_i} \otimes \1\right) \rho_{AB} 
\left(\proj{\phi_i} \otimes \1\right),
\end{split}
\end{equation}
which are post-measurement states, when the measurements were performed on the
part $A$. Berta \emph{et al.} \cite{berta2010uncertainty} showed that a bound
on the uncertainties of the measurement outcomes depends on the amount of
entanglement between measured particle and the quantum memory. As a
consequence, they formulated a conditional uncertainty relation given as
\begin{equation}\label{berta}
S(X|B) + S(Y|B) \geq B_{MU} + S(A|B)=B_{BCCRR}.
\end{equation}
Entropy $S(A|B)$ quantifies the amount of entanglement between the particle and
the memory. The bound of Berta \emph{et al.}~\cite{berta2010uncertainty} was
improved by Coles and Piani in \cite{coles2014improved} through replacing the
state-dependent value $B_{MU}$ with larger parameter. The result of Coles and
Piani was improved in \cite{xiao2016improved}. This relation was also generalized for R\'{e}nyi entropies and several important result can be found in \cite{coles2012uncertainty,muller2013quantum,tomamichel2014relating}. The uncertainty relation is also
considered in the context of \emph{quantum-to-classical randomness extractors}
(QC-extractors) \cite{berta2014quantum}. It is proved that QC-extractors gives
rise to uncertainty relation with the presence of a quantum memory.

In the absence of the quantum memory the bound (\ref{berta}) reduces to
(\ref{massen}) for pure $\rho_{AB}$. The results by Berta \emph{et al.}
\cite{berta2010uncertainty} can be applied to the problem of entanglement
detection \cite{li2011experimental} or quantum cryptography
\cite{coles2017entropic}. The bound quantified by Berta \emph{et al.}
\cite{berta2010uncertainty} was experimentally validated
\cite{prevedel2011experimental}.

In this paper we aim at finding state-independent entropic uncertainty relations
in terms of von Neumann and Tsallis entropies. Our results apply to states with
a fixed amount of entanglement, described by parameter $\lambda$. This allows us
to find non-trivial bounds for the entropic uncertainty relation. Otherwise we
would obtain a lower bound equal to zero. This bound is achieved in the case of
the maximally entangled state. Notice, that Berta \emph{et al.} formulated the
bound in a similar way. In their approach the information about entanglement was
hidden in the term of $H(A|B)$. In this case the bound is also zero for the
maximally entangled state.

JLet us now recall the notion of Tsallis
entropy~\cite{tsallis1988possible} which is a non-additive generalization of von
Neumann entropy and for a state $\rho_X$ it is defined as
\begin{equation}
T_q(X) = \frac{1}{q-1}\left(1-\sum_i \nu_i^q \right),
\end{equation}
where $\nu_i$ are the eigenvalues of $\rho_X$ and $q\in[0, \infty)$. Tsallis
entropy is identical to the Havrda-Charv{\'a}t structural
$\alpha$-entropy~\cite{havrda1967quantification} in information theory.  Note
that when $q\to 1$ we have $T_q(X) \to S(X)$. The chain rule applies to Tsallis
entropies, hence
\begin{equation}
T_q(A|B) = T_q(A, B) - T_q(B).
\end{equation}
We will use the following notation for Tsallis point entropy
\begin{equation}
t_q(x) = \frac{1}{q-1}\left(1- x^q - (1-x)^q\right).
\end{equation}
In the limit $q \to 1$ we recover
\begin{equation}
h(x) = \eta(x) + \eta(1-x),
\end{equation}
where $\eta(x)=-x\log x$.

\section{Qubit conditional uncertainty relations}

Without a loss of generality let us assume that we start with an entangled
state $\rho_{AB}=\proj{\psi_{AB}}$, where
$\ket{\psi_{AB}}=\sqrt{\lambda}\ket{00}+\sqrt{1-\lambda}\ket{11}$. In this
case, the parameter $\lambda$ describes the entanglement between the parties
$A$ and $B$. We chose the eigenvectors of $X$ and $Y$ as $\ket{\phi_i} =
O(\theta)\ket{i}$ and $\ket{\psi_i} = O(\theta+\epsilon)\ket{i}$, where
\begin{equation}
O(\theta) = 
\left(\begin{matrix}
\cos\theta & -\sin\theta  \\
\sin\theta & \cos\theta  
\end{matrix}\right)\in SO(2)\label{eq:rotation}
\end{equation}
is a real rotation matrix. Hence, instead of optimizing the uncertainty
relation over all possible states $\rho_{AB}$, we will instead optimize over
$\theta$. Hereafter we assume $\theta,\varepsilon \in [0, \pi/2]$. In this case
we have
\begin{equation}
c = \begin{cases}
|\cos \varepsilon|,& \varepsilon \leq \pi/4 \\
|\sin \varepsilon|,& \varepsilon > \pi/4.
\end{cases}
\end{equation}

It is important to notice, that we can restrict our attention to real rotation
matrices. This follows from the fact, that any unitary matrix is similar to real
rotation matrix. Matrices are similar, $U \sim V$, if for some permutation
matrices $P_1,P_2$ and diagonal unitary matrices $D_1,D_2$, we have $V=P_1D_1 U
D_2 P_2$~\cite{puchala2013majorization}. Next we note that the eigenvalues of
states $\rho_{XB}$ are invariant with respect to the equivalence relation.

We should also note here, that the two qubit scenario, simple as it is, may be
easily generalized to an arbitrary dimension of system $B$.

As we are interested in binary measurements, the states $\rho_{XB}$ and 
$\rho_{YB}$ are rank-2 operators. The non-zero eigenvalues of $\rho_{XB}$ can 
be easily obtained as
\begin{equation}
\begin{split}
\mu^{XB}_1 = & \lambda\sin^2(\theta) + 
(1-\lambda)\cos^2(\theta),\\
\mu^{XB}_2 = & \lambda\cos^2(\theta) + 
(1-\lambda)\sin^2(\theta).
\end{split}\label{eq:eigs}
\end{equation}
To obtain the eigenvalues of $\rho_{YB}$ we need to replace $\theta$ with 
$\theta+\varepsilon$.

\subsection{Analytical minima}
Using eigenvalues of $\rho_{XB}$ and $\rho_{YB}$, we arrive at
\begin{equation}
T_q(X|B) + T_q(Y|B) = t_q(\mu_1^{XB}) + t_q(\mu_1^{YB}) -2 
t_q(\lambda).\label{eq:exact}
\end{equation}
Let us perform detailed analysis on the case when $q\to 1$, \ie\ the von
Neumann entropy case. We get
\begin{equation}
S(X|B)+S(Y|B) = h(\mu_1^{XB})+h(\mu_1^{YB}) - 2h(\lambda)\label{eq:01}
\end{equation}
In order to obtain an uncertainty relation, we need to minimize this quantity
over the parameter $\theta$. This is a complicated task even in the case
$\lambda=0$ and has been studied earlier~\cite{bosyk2011comment}. We guess that
$\theta = \pi/2-\varepsilon/2$ is an extremal point of~\eqref{eq:exact}.
Unfortunately, this point is the global minimum only when
\begin{equation}
-c \tanh^{-1} ((1-2\lambda)c)+\frac{(2\lambda-1)(1-c^2)}{c^2(1-2\lambda)^2 -1} 
< 0.\label{eq:boundary}
\end{equation}
A numerical solution of this inequality is shown in Fig.~\ref{fig:ineq}. 
When this condition is satisfied, the uncertainty relation is 
\begin{equation}
\begin{split}
&S(X|B)+S(Y|B) \geq\\ 
\log 4+&\eta(1+c-2\lambda c)+\eta(1-c+2\lambda c) -2h(\lambda).
\end{split}\label{eq:eta}
\end{equation}
When the condition in Eq.~\eqref{eq:boundary} is not satisfied, our guessed
extreme point becomes a maximum and two minima emerge, symmetrically to
$\theta=\pi/2-\varepsilon/2$. The reasoning can be generalized to $T_q$ in a
straightforward, yet cumbersome way. The details are presented in
Appendix~\ref{sec:app}. The solutions of inequality~\eqref{eq:boundary} along
with inequality~\eqref{eq:boundary-tsallis} for various values of $q$ are shown
in~Fig.\ref{fig:ineq}

\begin{figure}
	\centering{\includegraphics{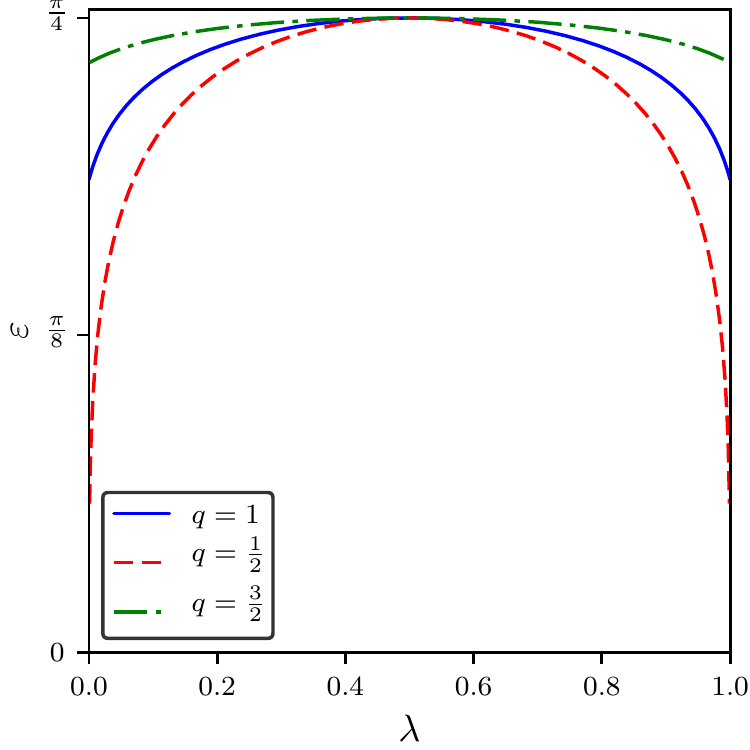}}
	\caption{Numerical solution of the inequality~\eqref{eq:boundary} ($q=1$) as a 
		function of $\lambda$ along with solution of a corresponding inequality for 
		chosen other values of $q$.}\label{fig:ineq}
\end{figure}

\subsection{Bounding the relative entropies}

In order to study the case of general Tsallis entropies $T_q$, we introduce the
following proposition

\begin{proposition}\label{th:bound}
	Let $\alpha \in [0,1]$ and $q \in [0, 2] \cup [3,\infty)$, then 
	\begin{equation}
	\begin{split}
	&t_q\big(\alpha p+(1-\alpha)(1-p)\big)\\ 
	&\geq 4 \frac{\left(\alpha ^q+(1-\alpha 
		)^q-2^{1-q}\right)}{\alpha ^q+(1-\alpha 
		)^q+q-2}  p(1-p) \big(1-t_q(\alpha)\big) + 
	t_q(\alpha).
	\end{split}
	\end{equation}
	In the cases $q=2$ and $q=3$ we have an equality.
\end{proposition}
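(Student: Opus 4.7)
The plan is threefold: simplify the bound to a transparent quadratic in $p$, reduce the inequality to monotonicity of a single-variable function, and establish that monotonicity by twice-differentiating an auxiliary function whose zeroth- and first-order data vanish at the origin.

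First, using $(q-1)t_q(\alpha) = 1 - \alpha^q - (1-\alpha)^q$, I would verify the identities $\alpha^q + (1-\alpha)^q + q - 2 = (q-1)(1 - t_q(\alpha))$ and $\alpha^q + (1-\alpha)^q - 2^{1-q} = (q-1)(t_q(1/2) - t_q(\alpha))$, which collapse the right-hand side to $4\bigl(t_q(1/2) - t_q(\alpha)\bigr)\, p(1-p) + t_q(\alpha)$. This is the unique symmetric quadratic $A\, p(1-p) + B$ agreeing with the LHS at $p \in \{0, 1/2, 1\}$. Substituting $s = 2p-1$ and $\beta = (2\alpha-1)/2 \in [-1/2, 1/2]$ gives $\alpha p + (1-\alpha)(1-p) = 1/2 + \beta s$ and $p(1-p) = (1-s^2)/4$. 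Writing $\psi(u) := t_q(1/2) - t_q(1/2+u)$, which is even and nonnegative as $t_q$ peaks at $1/2$, the inequality rearranges to $\psi(\beta s) \leq s^2\, \psi(\beta)$. Setting $u = \beta s$, this reads $\psi(u)/u^2 \leq \psi(\beta)/\beta^2$ for $0 < u \leq \beta$, so the proposition, ranging over all admissible $\alpha, p$, is equivalent to $\psi(u)/u^2$ being nondecreasing on $(0, 1/2]$. After the rescaling $v = 2u$, and using $(1/2 \pm u)^q = 2^{-q}(1 \pm v)^q$, this further reduces to showing that
\begin{equation*}
G(v) \;=\; \frac{(1+v)^q + (1-v)^q - 2}{(q-1)\, v^2}
\end{equation*}
is nondecreasing on $v \in (0, 1]$.

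The main work, and the only step where I expect real effort, is this monotonicity. I would write $G'(v) = N(v)/((q-1)v^3)$ for $N(v) = qv\bigl[(1+v)^{q-1} - (1-v)^{q-1}\bigr] - 2\bigl[(1+v)^q + (1-v)^q - 2\bigr]$, check that $N(0) = N'(0) = 0$, and then obtain by direct differentiation the compact identity
\begin{equation*}
N''(v) \;=\; q(q-1)(q-2)\, v\, \bigl[(1+v)^{q-3} - (1-v)^{q-3}\bigr],
\end{equation*}
in which two middle contributions cancel. Since $r \mapsto (1+v)^r - (1-v)^r$ vanishes at $r=0$ and is strictly increasing in $r$ (its derivative $(1+v)^r \log(1+v) - (1-v)^r \log(1-v)$ is strictly positive for $v \in (0,1)$), the bracket has the sign of $q - 3$. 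Thus $\mathrm{sign}\, N''(v) = \mathrm{sign}(q(q-1)(q-2)(q-3))$ on $(0,1)$, and integrating twice from the origin propagates this sign to $N$ itself, giving $\mathrm{sign}\, G'(v) = \mathrm{sign}(q(q-2)(q-3))$. A short case check shows this is nonnegative for $q \geq 0$ exactly on $[0, 2] \cup [3, \infty)$, which proves the proposition.

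For the equality cases at $q = 2$ and $q = 3$, the factor $(q-2)$ or $(q-3)$ makes $N'' \equiv 0$, so $N \equiv 0$, $G$ is constant, and the bound is tight for every $\alpha$ and $p$. The degenerate $q = 0$ has $t_0 \equiv 1$ with both sides equal to $1$, and $q = 1$ is handled by continuity. The single delicate ingredient is the double cancellation yielding the clean formula for $N''$; everything else is routine.
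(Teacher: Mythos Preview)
Your proof is correct and takes a genuinely different route from the paper's. The paper defines $f(p)$ as the difference of the two sides, observes $f(0)=f(\tfrac12)=0$, computes $f'''(p)$ and shows it is positive for $q\in[0,2)\cup(3,\infty)$ so that $f'$ is strictly convex on $(0,\tfrac12)$; it then argues by contradiction via Rolle's theorem that $f$ cannot have an interior zero, and separately establishes $f'(0)>0$ by analyzing an auxiliary function $g(\alpha)$ and its convexity in $\alpha$. Your route first collapses the right-hand side to the quadratic interpolant $4\bigl(t_q(\tfrac12)-t_q(\alpha)\bigr)p(1-p)+t_q(\alpha)$, and the substitution $s=2p-1$, $\beta=(2\alpha-1)/2$ then absorbs both parameters into a single monotonicity statement for $G(v)=\bigl((1+v)^q+(1-v)^q-2\bigr)/\bigl((q-1)v^2\bigr)$. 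The clean identity $N''(v)=vF'''(v)$, together with $N(0)=N'(0)=0$, replaces the Rolle-plus-contradiction step by a straightforward double integration from the origin. Both arguments ultimately rest on the sign of the same $(q-3)$-power difference, but yours is more economical: it eliminates the separate $f'(0)>0$ verification (and the auxiliary $g(\alpha)$), and it makes transparent from the start why the bound is exactly the symmetric quadratic through $p\in\{0,\tfrac12,1\}$. The paper's approach, in turn, requires no preliminary algebraic simplification of the constant in front of $p(1-p)$.
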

\begin{proof}
	We define
	\begin{equation}
	\begin{split}
	&f(p) = t_q\big(\alpha p+(1-\alpha)(1-p)\big)\\ 
	&- 4 \frac{\left(\alpha
		^q+(1-\alpha)^q-2^{1-q}\right)}{\alpha ^q+(1-\alpha)^q+q-2} p(1-p)
	\big(1-t_q(\alpha)\big) - t_q(\alpha).
	\end{split}
	\end{equation}
	Next we note, that $f(0) = f(\frac12)=0$. We will show, that $f$ has no other
	zeros on interval $(0, \frac12)$. We calculate
	\begin{equation}
	\begin{split}
	&f{'''}(p)=
	(2 \alpha -1)^3 (q-2) q \big((\alpha -2 \alpha  p+p)^{q-3}\\
	&-(-\alpha +(2 
	\alpha -1) p+1)^{q-3}\big),
	\end{split}
	\end{equation}
	which is positive for $q\in [0,2)\cup(3,\infty)$ and $p \in[0, \frac12]$.
	Therefore we obtain, that $f'(x)$ is strictly convex on $(0, \frac12)$ and
	$f'(1/2)=0$.
	
	Now let us assume, that for $x_0 \in (0,1/2)$ we have $f(x_0) = 0$. Then by 
	Rolle's theorem, there exist points $0 < y_0 < x_0 < y_1 < \frac12$ such that 
	$f'(y_0) = f'(y_1) = 0$. Together with fact that $f'(1/2)=0$ we obtain  a 
	contradiction with the convexity of $f'$ on $(0,\frac12)$.
	
	Last thing to show is that  for some $\varepsilon \in (0,\frac12)$ we have 
	$f(\varepsilon) >0$.
	To show it we write 
	\begin{equation}
	\begin{split}
	f'(0) &= \frac{\alpha ^{q-1} (2 \alpha  (q-2)-q)}{q-1}\\
	&+\frac{(4 \alpha -2 \alpha  q+q-4) 
		(1-\alpha )^{q-1}+2^{3-q}}{q-1}
	=: g(\alpha).
	\end{split}
	\end{equation}
	Now we note, that $g(\alpha)$ is positive for $\alpha \in (0,1) \setminus
	\left\{\frac12  \right\}$ and $q \in [0,2)\cup (3,\infty)$. This follows form
	convexity of $g$ on these sets and the fact,  that it has a minimum,
	$g\left(\frac12\right) = 0$. From this fact there exist $\varepsilon>0$ such
	that $f(\varepsilon)>0$.
	
	The equalities in the case $q=2,3$ follow from a direct inspection.
\end{proof}

Now we are ready to state and prove the main result of this work

\begin{theorem}
	Let $\rho_{AB}=\proj{\psi_{AB}}$, where 
	$\ket{\psi_{AB}}=\sqrt{\lambda}\ket{00}+\sqrt{1-\lambda}\ket{11}$. Let us 
	choose two observables $X$ and $Y$ with eigenvectors $\ket{\phi_i} =
	O(\theta)\ket{i}$ and $\ket{\psi_i} = O(\theta+\epsilon)\ket{i}$, where 
	$O(\theta)$ is as in Eq.~\eqref{eq:rotation}. Then, the Tsallis entropic 
	conditional uncertainty relation is
	\begin{equation}
	T_q(X|B)+T_q(Y|B) \geq 
	2\frac{\lambda^q+(1-\lambda)^q-2^{1-q}}{\lambda^q+(1-\lambda)^q+q-2}
	(1-t_q(\lambda)) (1-c^2).\label{eq:main-result}
	\end{equation}
\end{theorem}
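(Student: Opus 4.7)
The plan is to chain three ingredients: the exact identity of Eq.~\eqref{eq:exact}, the point-entropy bound of Proposition~\ref{th:bound}, and a short trigonometric minimization over the rotation angle. The eigenvalues in Eq.~\eqref{eq:eigs} are already in the form $\mu_1^{XB}=\lambda\sin^2\theta+(1-\lambda)\cos^2\theta=\alpha p+(1-\alpha)(1-p)$ required by the proposition, with $\alpha=\lambda$ and $p=\sin^2\theta$; an identical identification works for $\mu_1^{YB}$ upon replacing $\theta$ by $\theta+\varepsilon$.

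First, I would apply Proposition~\ref{th:bound} separately to each of $t_q(\mu_1^{XB})$ and $t_q(\mu_1^{YB})$, replacing them by their lower bounds $4Kp(1-p)(1-t_q(\lambda))+t_q(\lambda)$, where $K := \frac{\lambda^q+(1-\lambda)^q-2^{1-q}}{\lambda^q+(1-\lambda)^q+q-2}$. Using $4\sin^2 x\cos^2 x=\sin^2(2x)$, the $4p(1-p)$ factors collapse into $\sin^2(2\theta)$ and $\sin^2(2\theta+2\varepsilon)$ respectively. Substituting these two estimates into Eq.~\eqref{eq:exact} and observing that the two additive $t_q(\lambda)$ contributions exactly cancel the $-2t_q(\lambda)$ term, I arrive at
$$T_q(X|B)+T_q(Y|B) \geq K\bigl(\sin^2(2\theta)+\sin^2(2\theta+2\varepsilon)\bigr)\bigl(1-t_q(\lambda)\bigr).$$

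It then remains to show that the trigonometric factor is bounded below by $2(1-c^2)$ uniformly in $\theta$. Here I would use the identity $\sin^2(2\theta)+\sin^2(2\theta+2\varepsilon)=1-\cos(2\varepsilon)\cos(4\theta+2\varepsilon)$, whose infimum over $\theta$ is $1-|\cos(2\varepsilon)|$. A short case split according to whether $\varepsilon\leq\pi/4$ or $\varepsilon>\pi/4$, together with the piecewise definition of $c$ recalled before Eq.~\eqref{eq:eigs}, identifies this infimum with $2\sin^2\varepsilon$ or $2\cos^2\varepsilon$ respectively, both of which equal $2(1-c^2)$.

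The main difficulty has already been absorbed into Proposition~\ref{th:bound}: once its quadratic-in-$p$ lower bound on $t_q$ is in hand, what is left is a direct trigonometric computation. A useful feature of the argument is that the resulting estimate is uniform in $\theta$, so the state-independent flavor of the uncertainty relation that the paper is after comes out automatically, without any further optimization over the measurement basis.
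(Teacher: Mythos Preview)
Your proposal is correct and follows essentially the same route as the paper: apply Proposition~\ref{th:bound} to each term of Eq.~\eqref{eq:exact} with $\alpha=\lambda$, $p=\sin^2\theta$ (resp.\ $\sin^2(\theta+\varepsilon)$), let the $t_q(\lambda)$ terms cancel, and then minimize $\sin^2 2\theta+\sin^2(2\theta+2\varepsilon)$ over~$\theta$. The only cosmetic difference is that the paper simply names the minimizers $\theta=\pi/2-\varepsilon/2$ and $\theta=\pi/4-\varepsilon/2$, whereas you derive the minimum via the identity $1-\cos 2\varepsilon\cos(4\theta+2\varepsilon)$; both yield $2(1-c^2)$.
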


\begin{proof}
	Applying Proposition~\ref{th:bound}, to Eq.~\eqref{eq:exact} we get
	\begin{equation}
	\begin{split}
	&T_q(X|B)+T_q(Y|B)\geq\\ 
	&\frac{\lambda^q+(1-\lambda)^q-2^{1-q}}{\lambda^q+(1-\lambda)^q+q-2} 
	(1-t_q(\lambda)) (\sin^2(2\theta+2\epsilon)+\sin^2 2\theta)
	\end{split}
	\end{equation}
	The right hand side achieves a unique minimum $\theta=\pi/2-\varepsilon/2$ for 
	$\varepsilon\leq \pi/4$ and $\theta=\pi/4-\varepsilon/2$ for $\varepsilon > 
	\pi/4$. 
	Inserting this value we recover Eq.~\eqref{eq:main-result}.
\end{proof}

\begin{remark}
	In the limit $q\to 1$ we get the following uncertainty relation for Shannon 
	entropies
	\begin{equation}
	S(X|B)+S(Y|B) \geq 2(\log 2 - h(\lambda)) (1-c^2) = B_{KPP}.\label{eq:Bkpp}
	\end{equation}
\end{remark}

\begin{remark}
	Using the concavity of the conditional von Neumann entropy, we may generalize 
	bound~\eqref{eq:Bkpp} to mixed states $\rho_{AB}$. We get
	\begin{equation}
	S(X|B) + S(Y|B) \geq 2(\log 2 - S(B))(1-c^2).
	\end{equation}
\end{remark}

\begin{remark}
	The state dependent entropic uncertainty relation for $q\to 1$ reads
	\begin{equation}
	\begin{split}
	S(X|B) + S(Y|B) & \geq 2(\log 2 - h(\lambda))(\sin^2(2\theta + 2\varepsilon)+ 
	\sin^2 2\theta) \\
	&= B(\theta).\label{eq:02}
	\end{split}
	\end{equation}
\end{remark}
A comparison with the known entropic uncertainty relations for $\lambda=0$ and
$q\to 1$ is shown in Fig.~\ref{fig:separable}. As can be seen, our result gives
a tighter bound than the one obtained by Massen and Uffink for all values of
$\varepsilon$. The bound is also tighter than $B_{Maj2}$ when $\varepsilon$ is
in the neighborhood of $\pi/4$.

A comparison of the exact value~\eqref{eq:01}, state dependent lower bound
~\eqref{eq:02} and $B_{BCCRR}$ for different parameters $\lambda$, $\theta$ and
$\epsilon$ is presented in Figs~\ref{fig:example1} and~\ref{fig:example2}.

\begin{figure}\label{fig:separable}
	\centering\includegraphics{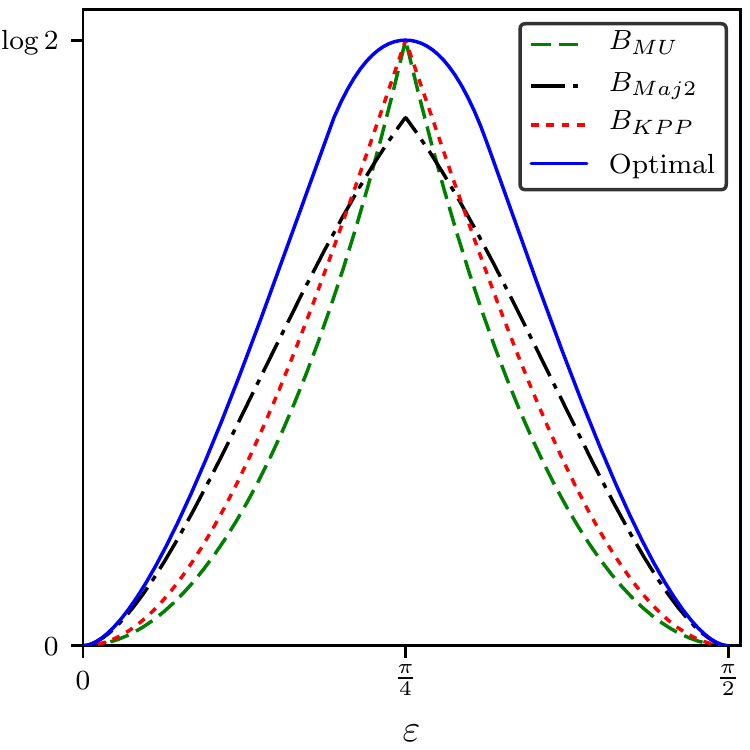}
	\caption{Comparison of our result with known bounds in the case $\lambda=0$. 
		Blue solid line is the (numerical) optimal solution, dashed green is the 
		$B_{MU}$ bound, black dashed-dotted is $B_{Maj2}$ and red dotted $B_{KPP}$.}
\end{figure}

\begin{center}
	\begin{figure}[H]\label{fig:example}
		\centering \subfigure[$\lambda=0.1$, $\epsilon=\pi/4.2$] {%
			\centering 	
			\includegraphics[scale=1]{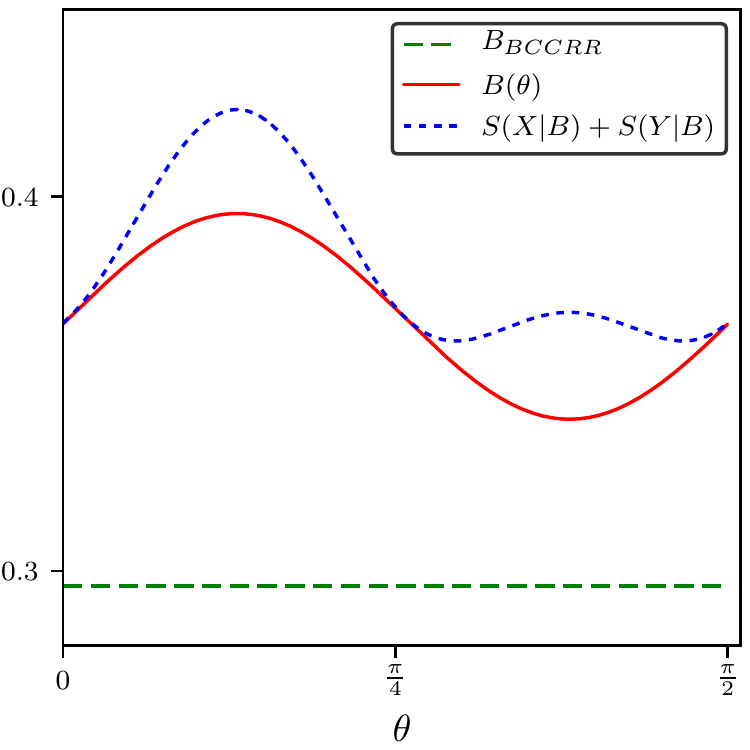} \label{fig:lam0.7} }%
		\subfigure[$\lambda=0.1$, $\epsilon=\pi/6$] {%
			\centering \includegraphics[scale=1]{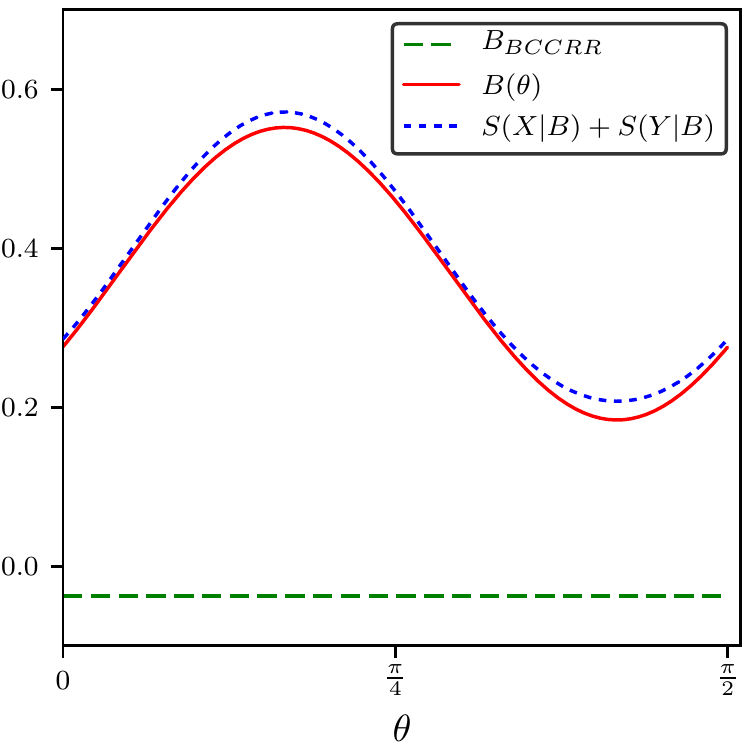} \label{fig:lam0.9} }%
		\caption{Comparison of our state-dependent result with $B_{BCCRR}$ and the
			exact value of conditional entropies for different parameters $\lambda$ and
			$\epsilon$ as a function of $\theta$.} %
		\label{fig:example1}
	\end{figure}
		
	\end{center}
\begin{figure}
	\centering\includegraphics{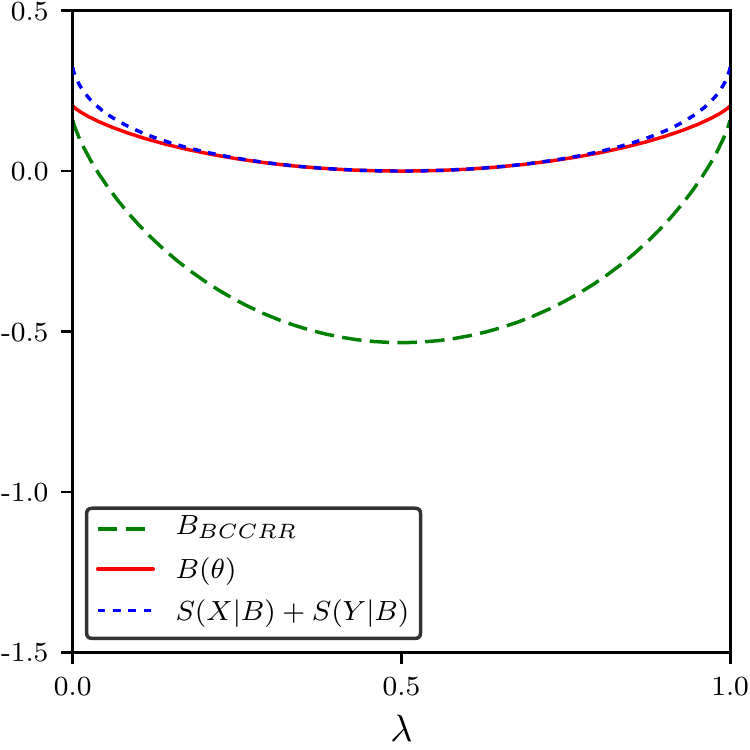}
	
	\caption{Comparison of $B(\theta)$ with $B_{BCCRR}$ and exact values of
		conditional entropies as a function of $\lambda$. Here $\varepsilon=\pi/8$,
		$\theta=\pi/2-\varepsilon/2$.}\label{fig:example2}
\end{figure}

\section{Security of quantum key distribution protocols}

One of the possible application of the uncertainty relation is quantum
cryptography, where the relation allows us to bound of the amount of key the
parties are able to extract per state. 

Assume that an eavesdropper creates a quantum system $\rho_{ABE}$. Next, parts
$A$ and $B$ are distributed to Alice and Bob. The generation of a secret key is
based on measurements $X,Y$ and $X',Y'$ performed by Alice and Bob,
respectively. Subsequently, Alice and Bob inform each other of their choices of
measurements. The security of the key depends on the correlation between the
measurement outcomes.

According to the investigations of Devetak and
Winter~\cite{devetak2005distillation}, the amount of extractable key is
quantified as $K\geq H(X|E)-H(X|B)$. Using our bound we are able to bound the
amount of extractable key in the terms of von Neumann entropies by
\begin{equation}\label{eqn:amount-key}
K \geq 2 (1-c^2)(\log 2 -S(B)) - S(A|B) - S(X|X') -S(Y|Y').
\end{equation}
In the above $S(X|X')$ is the conditional entropy of the state shared by Alice
and Bob, when both parties execute the measurement schemes $X,X'$ respectively.
This relates our result to~\cite{shor2000simple}. In our case Alice and Bob
need to upper bound entropies $S(A|B), S(X|X')$ and $S(Y|Y')$. The former
entropies can be bounded by quantities such as frequency of the agreement of
the outcomes.

\section{Conclusion} In this paper, we have derived new state-independent
uncertainty relations in terms of von Neumann and Tsallis entropies for qubits
and binary observables with respect to quantum side information. Our bounds
were compared with well-know bounds derived by Massen and Uffink
\cite{maassen1988generalized}, Rudnicki \emph{et al.} \cite{rudnicki2014strong}
and Berta \emph{et al.} \cite{berta2010uncertainty}. This paper can be also
treated as a generalization of results included in \cite{bosyk2011comment}.

Presented results are expected to have application to witnessing entanglement 
or in quantum cryptography as a measure of information in quantum key 
distribution protocols. Verification of our results in potential applications 
seems to be interesting task.

\begin{acknowledgements}
The authors acknowledge the support by the Polish
National Science Center under the Project Numbers 2013/11/N/ST6/03090 (D.~K.),
2015/17/B/ST6/01872 ({\L}.~P.) and 2016/22/E/ST6/00062 (Z.~P.).
\end{acknowledgements}


\section*{Appendix}
\appendix

\section{Generalization of Eq~\eqref{eq:exact} to Tsallis entropy 
case}\label{sec:app}
We start by introducing the following notation
\begin{equation}
\eta_q(x)=-\frac{x^q}{q-1}.
\end{equation}
Using this notation we note that
\begin{equation}
t_q(x)=\frac{1}{q-1}+\eta_q(x)+\eta_q(1-x)
\end{equation}
Now we recall Eq~\eqref{eq:exact}
\begin{equation}
T_q(X|B) + T_q(Y|B) = t_q(\mu_1^{XB}) + t_q(\mu_1^{YB}) -2 
t_q(\lambda).
\end{equation}
Again, we guess that the right hand side as a minimum at 
$\theta=\pi/2-\varepsilon/2$. Similar to the case $q\to 1$ we get that this is 
a minimum only when
\begin{equation}
(1+c-2\lambda c)^{q-2}\left( 2\lambda -1 +\frac{qc^2(1-2\lambda)+c}{q-1} 
\right)+
(1-c+2\lambda c)^{q-2}\left( 2\lambda -1 + \frac{qc^2(1-2\lambda)-c}{q-1} 
\right) > 0.
\label{eq:boundary-tsallis}
\end{equation}
This follows from the second derivative of Eq~\eqref{eq:exact} with respect to
$\theta$. The plots of the solutions to this inequality are shown in
Fig.~\ref{fig:ineq}. Note that when $q \to 1$ we recover the
bound~\eqref{eq:boundary}. When this is a minimum we obtain
\begin{equation}
T_q(X|B) + T_q(Y|B) \geq \frac{2}{q-1} +
2^{1-q}(\eta_q(1+c-2\lambda c) + \eta_q(1-c + 2\lambda c)) - 2 t_q(\lambda).
\end{equation}
In the case when $q\to 1$ we recover Eq~\eqref{eq:eta}.
\bibliography{cup}
\bibliographystyle{unsrt}

\end{document}